\newtheorem*{remark}{Remark}
\newtheorem{theorem}{Theorem}
\newtheorem*{theorem*}{Theorem}
\newtheorem{proposition}{Proposition}
\newtheorem*{proposition*}{Proposition}
\newcommand{\R}{\mathbb{R}}
\newcommand{\Rm}{\mathbb{R}^m}
\newcommand{\Dtwo}{\mathcal{D}_k}
\newcommand{\bx}{\boldsymbol{x}}
\newcommand{\ba}{\boldsymbol{a}}
\newcommand{\bu}{\boldsymbol{u}}
\newcommand{\be}{\begin{eqnarray*}}
\newcommand{\ee}{\end{eqnarray*}}
\newcommand{\bs}{\boldsymbol}
\newcommand{\Sm}{\mathbb{S}^{m-1}}
\newcommand{\Mk}{\mathcal{M}_k}
\newcommand{\Mkk}{\mathcal{M}_{k-1}}
\newcommand{\HK}{\mathcal{H}_k}
\newcommand{\Hk}{\mathcal{H}_k}
\newcommand{\udx}{\langle u,D_x\rangle}
\newcommand{\dudx}{\langle D_u,D_x\rangle}
\newcommand{\bpm}{\begin{pmatrix}}
\newcommand{\epm}{\end{pmatrix}}
\newcommand{\baa}{\begin{align*}}
\begin{document}

\title{Bosonic Laplacians in higher spin Clifford analysis}

\author{Chao Ding$^1$\thanks{Electronic address:  {\tt cding@ahu.edu.cn}.} and John Ryan$^2$\thanks{Electronic address: {\tt jryan@uark.edu.}} \\
\emph{\small $^1$Center for Pure Mathematics, School of Mathematical Sciences,}\\
\emph{\small Anhui University, Hefei, P.R. China}\\
\emph{\small $^2$Department of Mathematical Sciences, University of Arkansas, Fayetteville, AR, USA} }
\date{}

\maketitle

\begin{abstract}
In this article, we firstly introduce higher spin Clifford analysis, which are considered as generalizations of classical Clifford analysis by considering functions taking values in irreducible representations of the spin group. Then, we introduce a type of second order conformally invariant differential operators, named as bosonic Laplacians, in the higher spin Clifford analysis. In particular, we will show their close connections to classical Maxwell equations.  At the end, we will introduce a new perspective to define bosonic Laplacians, which simplifies the connection between bosonic Laplacians and Rarita-Schwinger type operators obtained before. Moreover, a matrix type Rarita-Schwinger operator is obtained and some results related to this new first order matrix type operator are provided.
\end{abstract}
{\bf Keywords:}\quad Bosonic Laplacians,  Matrix type Rarita-Schwinger opreator\\
{\bf AMS subject classification (2020):}\quad 30G35, 42B37
\section{Introduction}
Classical Clifford analysis started as a generalization of aspects of one variable  complex analysis to higher dimensional Euclidean spaces. At the center of this theory is the study of the Dirac operator on $\mathbb{R}^m$. The Dirac operator $D_x$ is considered as a generalization of the role of the Cauchy-Riemann operator. Moreover, this operator is related to Laplacian with $\Delta_x=-D_x^2$. More details on classical Clifford analysis can be found in \cite{Brackx,Delanghe}.
\par
In the past few decades, more researchers have been working on generalizations of classical Clifford analysis to the so called higher spin Clifford analysis. In this context, we investigate higher spin differential operators acting on functions taking values in arbitrary irreducible representations of the spin group. A well-known higher spin differential operator is the Rarita-Schwinger operator, which is a first order conformally invariant differential operator. This differential operator was firstly introduced by Rarita and Schwinger in 1941 to describe the relativistic field equation of spin-$3/2$ fermions. Bure\v s et al. \cite{Bures} firstly generalized the Rarita-Schwinger operator to higher spin cases in the framework of Clifford analysis. Dunkl et al. \cite{Dunkl} revisited the Rarita-Schwinger operator with some analytic approaches in 2014. In \cite{DR1}, the authors showed that the Rarita-Schwinger operator could be constructed as a Stein-Weiss gradient \cite{Stein}.
\par
In 2016, Eelbode et al. \cite{Eelbode} pointed out that $\Lambda_x$ is no longer conformally invariant in the higher spin spaces. Instead, they introduced a second order conformally invariant differential operator, which is named as the generalized Maxwell operator. It turns out that the generalized Maxwell operator reduces to the source-free classical Maxwell equations with Lorentz conditions in the Minkowski space. In 2017, De Bie et al. \cite{DeBie} generalized the generalized Maxwell operator to the so called the higher spin Laplace operators (also called bosonic Laplacians).  In  2019, Ding and Ryan discovered a Borel-Pompeiu formula and a Green type integral formula for bosonic Laplacians. Further, Ding et al. investigated some boundary value problems for bosonic Laplacians in the upper half-space and the unit ball. More importantly, with the uniqueness of the solution to the Dirichlet problem of bosonic Laplacians, a Poisson integral formula for null solutions to bosonic Laplacians was provided in \cite{DNR} and some other important properties, such as mean value property, Cauchy's estimates, etc. were introduced there.
\par
In this article, we will review the construction, connection to classical Maxwell equations and some important properties for bosonic Laplacians. Some recent progress will also be introduced at the end. This article is organized as follows. In Section 2, we introduced some preliminaries of Clifford algebra. Section 3 is devoted to the motivation of constructing bosonic Laplacians, which is their connections to classical Maxwell equations. In Section 4, we reviewed Rarita-Schwinger type operators and the decomposition of bosonic Laplacians in terms of Rarita-Schwinger type operators. In Section 5, we introduced a matrix form for bosonic Laplacians, which leads to a compact form of Borel-Pompeiu integral formula.
\section*{Acknowledgement}
This paper is dedicated to the memory of Wolfgang Tutschke. The authors are also grateful to the referees for helpful comments.

\section{Preliminaries}
Let $\{\bs{e}_1,\cdots,\bs{e}_m\}$ be a standard orthonormal basis for the $m$-dimensional Euclidean space $\R^m$. Suppose $\bx$ and $\ba$ are two arbitrary vectors in $\R^m$, we now show that a reflection of $\bx$ across the hyperplane perpendicular to $\ba$ can simply be written in the form $\ba\bx\ba$. This simplifies the calculation significantly later in this article. To explain this we need to introduce Clifford algebras.
\par
The (real) Clifford algebra $\mathcal{C}l_m$ is generated by $\R^m$ with the relationship $\bs{e}_i\bs{e}_j+\bs{e}_j\bs{e}_i=-2\delta_{ij},\ 1\leq i,j\leq m.$
An arbitrary element of the basis of the Clifford algebra can be written as $\bs{e}_A=\bs{e}_{j_1}\cdots \bs{e}_{j_r},$ where $A=\{j_1, \cdots, j_r\}\subset \{1, 2, \cdots, m\}$ and $1\leq j_1< j_2 < \cdots < j_r \leq m$. The $m$-dimensional Euclidean space $\R^m$ can be embedded into $\mathcal{C}l_m$ with the mapping
$\bx=(x_1,\cdots,x_m)\ \mapsto\quad \sum_{j=1}^mx_j\bs{e}_j.$
For $\bx\in\R^m$, one can easily see that $|\bx|^2=\sum_{j=1}^mx_j^2=-\bx^2$.
\par
Suppose that $\ba\in \mathbb{S}^{m-1}\subseteq \mathbb{R}^m$, if we consider $\ba\bx\ba$, we may decompose
$$\bx=\bx_{\ba\parallel}+\bx_{\ba\perp},$$
where $\bx_{\ba\parallel}$ is the projection of $\bx$ onto $\ba$ and $\bx_{\ba\perp}$ is the rest, perpendicular to $a$. Hence $\bx_{\ba\parallel}$ is a scalar multiple of $\ba$ and we have
$$\ba\bx\ba=\ba\bx_{\ba\parallel}\ba+\ba\bx_{\ba\perp}\ba=-\bx_{\ba\parallel}+\bx_{\ba\perp}.$$
So the action $\ba\bx\ba$ describes a reflection of $\bx$ across the hyperplane perpendicular to $\ba$.
\par
Let $\HK$ ($1 \leq k \in {\mathbb N}$) be the space of real-valued homogeneous harmonic polynomials of degree $k$ in $m$-dimensional Euclidean space. If we consider a function $f(\bx,\bu)\in C^{\infty}(\R^m\times\R^m,\HK)$, i.e., for a fixed $\bx\in\R^m$, $f(\bx,\bu)\in\HK$ with respect to $\bu \in \R^m$. Recall that bosonic Laplacians \cite{Eelbode} are defined by
\begin{eqnarray}\label{Dtwo}
&&\Dtwo:\ C^{\infty}(\R^m\times\R^m,\HK)\longrightarrow C^{\infty}(\R^m\times\R^m,\HK),\nonumber\\
&&\Dtwo=\Delta_{\bx}-\frac{4\langle \bu,\nabla_{\bx}\rangle\langle \nabla_{\bu},\nabla_{\bx}\rangle}{m+2k-2}+\frac{4|\bu|^2\langle \nabla_{\bu},\nabla_{\bx}\rangle^2}{(m+2k-2)(m+2k-4)},
\end{eqnarray}
where $\langle\ ,\ \rangle$ is the standard inner product in $\R^m$, $\nabla_{\bx}$ is the gradient with respect to $\bx$. In particular, $
\mathcal{D}_1=\Delta_{\bx}-\frac{4}{m}\langle \bu,\nabla_{\bx}\rangle\langle \nabla_{\bu},\nabla_{\bx}\rangle
$
 is the generalized Maxwell operator introduced in \cite{Eelbode}.
\section{Connection to source-free classical Maxwell equations}
In physics, the classical source-free coupled Maxwell equations are given as the following
\be
&&\nabla\cdot\bs{E}=0,\ \nabla\times\bs{E}=-\frac{\partial{\bs{B}}}{\partial t},\
\nabla\cdot\bs{B}=0,\ \nabla\times\bs{B}=\mu_0\epsilon_0\frac{\partial\bs{E}}{\partial t},
\ee
Where $\bs{E}$ is the electric field, $\bs{B}$ is the magnetic field, $\bs{B}$ and $\bs{E}$ are both vector fields in $\mathbb{R}^3$. $\mu_0$ is the permeability of free space and $\epsilon_0$ is the permittivity of free space.
\par
With the Lorenz condition, there is a different formulation of Maxwell equations in terms of a differential 2-form $F^{\alpha\beta}$, known as the Maxwell-Faraday tensor, $\partial_{\alpha}F^{\alpha\beta}=0$, which can also be rewritten as
$
\square C^{\beta}-\partial^{\beta}\partial_{\alpha}C^{\alpha}=0,
$
where $C^{\alpha}=(\Phi,\bs{C})$ is usually called an electromagenetic four-potential, $\partial^{\alpha}=(\partial_t,-\nabla)$ and $\partial_{\alpha}=(\partial_t,\nabla)$, more details can be found in \cite{Jackson}.
\par
The equation $\square C^{\beta}-\partial_{\beta}\partial^{\alpha}C^{\alpha}=0$ in the Minkowski space has a generalization to the $m$-dimensional Euclidean space given below.
\be
\Delta_{\bs{x}}\bs{f_s}(\bs{x})-\frac{4}{m}\sum_{j=1}^{m}\partial_{x_s}\partial_{x_j}\bs{f_j}(\bs{x})=0,\ 1\leq s\leq m.
\ee
where $f_s(\bx)$ is a vector valued function, with $\bx\in\R^m$. The constant $4/m$ allows the generalized operator above to preserve the conformal invariance property of the Maxwell equations. Notice that the space of real-valued homogeneous harmonic polynomials with degree-$1$ with respect to a variable $\bu\in\R^m$ , denoted by $\mathcal{H}_1(\R)$ , is spanned by $\{u_1,\cdots,u_m\}$. The $m$ equations above can be replaced by one equation
\be
\sum_{s=1}^{m}u_s\bigg(\Delta_{\bs{x}}\bs{f_s}(\bs{x})-\frac{4}{m}\sum_{j=1}^{m}\partial_{x_s}\partial_{x_j}\bs{f_j}(\bs{x})\bigg)=0.
\ee
This equation can also be rewritten as
\be
0&=&\sum_{s=1}^m\Delta_{\bx}u_sf_s(\bx)-\frac{4}{m}\sum_{j,s=1}^mu_s\partial_{x_s}\partial_{x_j}f_j(\bx)\\
&=&\sum_{s=1}^m\Delta_{\bx}u_sf_s(\bx)-\frac{4}{m}\sum_{j,s,k=1}^mu_s\partial_{x_s}\partial_{x_j}\partial_{u_j}u_kf_k(\bx).
\ee
Now, we consider $f(\bx,\bu):=\sum_{s=1}^mu_sf_s(\bx)$ as a $\mathcal{H}_1(\R)$-valued function on $\R^m$. With the help of the Dirac operator $\bs{D}_{\bx}=\sum_{s=1}^m\bs{e}_s\partial_{x_s}$ in the $m$-dimensional Euclidean space, this equation can be written in a compact form
\be
\left(\Delta_{\bs{x}}-\frac{4}{m}\langle \bs{u},\bs{D}_{\bs{x}}\rangle\langle \bs{D}_{\bs{u}},\bs{D}_{\bs{x}}\rangle\right)\bs{f}(\bs{x},\bs{u})=0,
\ee
and the operator $\Delta_{\bs{x}}-\frac{4}{m}\langle \bs{u},\bs{D}_{\bs{x}}\rangle\langle \bs{D}_{\bs{u}},\bs{D}_{\bs{x}}\rangle$, denoted by $\mathcal{D}_1$ is the generalized Maxwell operator \cite{Eelbode}.
\par
More generally, if we consider a function $f(\bx,\bu)\in C^{\infty}(\R^m,\HK(\R))$, i.e., for a fixed $\bx\in\R^m$, $f(\bx,\bu)\in\HK(\R)$ with respect to $\bu$, where $\HK(\R)$ stands for the space of real-valued homogeneous harmonic polynomials of degree $k$. The second order conformally invariant differential operators, named as bosonic Laplacians (also known as the higher spin Laplace operators \cite{Eelbode}), are defined as
\begin{eqnarray*}
&&\Dtwo:\ C^{\infty}(\R^m,\HK(\R))\longrightarrow C^{\infty}(\R^m,\HK(\R)),\nonumber\\
&&\Dtwo=\Delta_{\bx}-\frac{4\langle \bu,D_{\bx}\rangle\langle D_{\bu},D_{\bx}\rangle}{m+2k-2}+\frac{4|u|^2\langle D_{\bu},D_{\bx}\rangle^2}{(m+2k-2)(m+2k-4)}.
\end{eqnarray*}
This coincides the expression given in \eqref{Dtwo}.
\section{Rarita-Schwinger type operators}

Let $\mathcal{M}_k$ denote the space of $\mathcal{C}l_m$-valued monogenic polynomials homogeneous of degree $k$. Note that if $h_k(u)\in\Hk$, the space of $\mathcal{C}l_m$-valued harmonic polynomials homogeneous of degree $k$, then $D_uh_k(u)\in\mathcal{M}_{k-1}$, but $D_uup_{k-1}(u)=(-m-2k+2)p_{k-1}(u),$ where $p_{k-1}(u)\in \Mkk$. Hence,
\begin{eqnarray}\label{Almansi}
\mathcal{H}_k=\mathcal{M}_k\oplus u\mathcal{M}_{k-1},\ h_k=p_k+up_{k-1}.
\end{eqnarray}
This is an \emph{Almansi-Fischer decomposition} of $\Hk$ \cite{Dunkl}. In this Almansi-Fischer decomposition, we have $P_k^+$ and $P_k^-$ as the projection maps
\begin{eqnarray}
&&P_k^+=I+\frac{uD_u}{m+2k-2}:\ \mathcal{H}_k\longrightarrow \mathcal{M}_k, \label{Pk+}\\
&&P_k^-=I-P_k^+=\frac{-uD_u}{m+2k-2}:\ \mathcal{H}_k\longrightarrow u\mathcal{M}_{k-1} \label{Pk-}.
\end{eqnarray}
Suppose $U$ is a domain in $\mathbb{R}^m$. Consider a differentiable function $f: U\times \mathbb{R}^m\longrightarrow \mathcal{C}l_m$
such that, for each $x\in U$, $f(x,u)$ is a left monogenic polynomial homogeneous of degree $k$ in $u$. Then \textbf{the Rarita-Schwinger operator} \cite{Bures,Dunkl} is defined by
 $$R_k=P_k^+D_x:\ C^{\infty}(\Rm,\Mk)\longrightarrow C^{\infty}(\Rm,\Mk).$$
 We also need the following three more Rarita-Schwinger type operators.
 \begin{eqnarray*}
	&&\text{\textbf{The twistor operator:}}\ T_k=P_k^+D_x:\ C^{\infty}(\Rm,u\Mkk)\longrightarrow C^{\infty}(\Rm,\Mk),\\
	&&\text{\textbf{The dual twistor operator:}}\ T_k^*=P_k^-D_x:\ C^{\infty}(\Rm,\Mk)\longrightarrow C^{\infty}(\Rm,u\Mkk),\\
	&&\text{\textbf{The remaining operator:}}\ Q_k=P_k^-D_x:\ C^{\infty}(\Rm,u\Mkk)\longrightarrow C^{\infty}(\Rm,u\Mkk).
	\end{eqnarray*}
	More details can be found in \cite{Bures,Dunkl}. Let $Z_k^1(u,v)$ be the reproducing kernel for $\Mk$, which satisfies
\begin{eqnarray*}
f(v)=\int\displaylimits_{\Sm}\overline{Z_k^1(u,v)}f(u)dS(u),\ for\ all\ f(v)\in\Mk.
\end{eqnarray*}
Then the fundamental solution for $R_k$ (\cite{Dunkl}) is
\begin{eqnarray*}
E_k(x,u,v)=\frac{1}{\omega_{m}a_k}\frac{x}{||x||^m}Z_k^1(\frac{xux}{||x||^2},v),
\end{eqnarray*}
where the constant $a_k$ is $\displaystyle\frac{m-2}{m+2k-2}$ and $\omega_{m}$ is the area of the $m$-dimensional unit sphere. Similarly, we have the fundamental solution for $Q_k$ (\cite{Li}) as follows
\begin{eqnarray*}
F_k(x,u,v)=\frac{-1}{\omega_{m}a_k}u\frac{x}{||x||^m}Z_{k-1}^1(\frac{xux}{||x||^2},v)v.
\end{eqnarray*}
The bosonic Laplacians are constructed in \cite{DeBie}, and is defined as follows.
\begin{eqnarray*}
\Dtwo=\Delta_x-\displaystyle\frac{4\udx\dudx}{m+2k-2}+\displaystyle\frac{4||u||^2\dudx^2}{(m+2k-2)(m+2k-4)}.
\end{eqnarray*}
 The fundamental solution for $\Dtwo$ is also provided in the same reference. We denote it by
\begin{eqnarray*}
H_k(x,u,v)=\displaystyle\frac{(m+2k-4)\Gamma(\displaystyle\frac{m}{2}-1)}{4(4-m)\pi^{\frac{m}{2}}}||x||^{2-m}Z_k^2(\displaystyle\frac{xux}{||x||^2},v),
\end{eqnarray*}
where $Z_k^2(u,v)$ is the reproducing kernel for $\Hk$ and satisfies
\begin{eqnarray*}
f(v)=\int\displaylimits_{\Sm}\overline{Z_k^2(u,v)}f(u)dS(u),\ for\ all\ f(v)\in\Hk.
\end{eqnarray*}
\begin{theorem}\label{thmone}
Let $P_k^+$ and $P_k^-$ be the projection maps defined in (\ref{Pk+}) and (\ref{Pk-}). Then the higher spin Laplace operator $\Dtwo$ can be written as follows.
\begin{eqnarray}
\Dtwo&=&-R_k^2P_k^++\frac{2T_k^*R_kP_k^+}{m+2k-4}-\frac{2T_kQ_kP_k^-}{m+2k-4}-\frac{(m+2k)Q_k^2P_k^-}{m+2k-4}\label{D21}\\
         &=&-R_k^2P_k^++\frac{2R_kT_kP_k^-}{m+2k-4}-\frac{2Q_kT_k^*P_k^+}{m+2k-4}-\frac{(m+2k)Q_k^2P_k^-}{m+2k-4},\nonumber
\end{eqnarray}
when it acts on a function $f(x,u)\in C^{\infty}(\Rm,\Hk).$
\end{theorem}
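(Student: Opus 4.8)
The plan is to fix $f(x,u)\in C^\infty(\Rm,\Hk)$, split it through the Almansi-Fischer decomposition (\ref{Almansi}) as $f=f^++f^-$ with $f^+=P_k^+f\in\Mk$ and $f^-=P_k^-f\in u\Mkk$, and use that both $\Dtwo$ and $\Delta_x$ send $\Hk$-valued functions to $\Hk$-valued functions. Since each of $R_k,T_k,T_k^*,Q_k$ is a component of $D_x$ relative to this decomposition,
\[
D_xf=(R_kf^++T_kf^-)+(T_k^*f^++Q_kf^-),
\]
the first bracket lying in $\Mk$ and the second in $u\Mkk$, the right-hand side of (\ref{D21}) is exactly the $2\times2$ block operator whose entries I must match. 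Thus it suffices to establish the four matrix-entry identities $P_k^+\Dtwo f^+=-R_k^2f^+$, $\ P_k^-\Dtwo f^+=\frac{2}{m+2k-4}T_k^*R_kf^+$, $\ P_k^+\Dtwo f^-=-\frac{2}{m+2k-4}T_kQ_kf^-$ and $P_k^-\Dtwo f^-=-\frac{m+2k}{m+2k-4}Q_k^2f^-$.

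My toolbox is the set of (anti)commutator identities following at once from $\bs{e}_i\bs{e}_j+\bs{e}_j\bs{e}_i=-2\delta_{ij}$: namely $uD_x+D_xu=-2\udx$, $\ D_uD_x+D_xD_u=-2\dudx$, $\ uD_u+D_uu=-(m+2\Eu)$, together with $[\dudx,u]=D_x$, $\ [D_u,\udx]=D_x$, and $u^2=-\|u\|^2$, $\ D_x^2=-\Delta_x$. Because $\Delta_x$ acts only on $x$ it commutes with $P_k^\pm$, which act only on $u$; feeding $P_k^-\Delta_xf^+=\Delta_xP_k^-f^+=0$ and $P_k^+\Delta_xf^-=\Delta_xP_k^+f^-=0$ into the block form of $\Delta_x=-D_x^2$ yields the balance relations $T_k^*R_k+Q_kT_k^*=0$ on $\Mk$ and $R_kT_k+T_kQ_k=0$ on $u\Mkk$. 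These are precisely what converts the first displayed form of (\ref{D21}) into the second, so once the first is proved the equivalence of the two is immediate.

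To prove the first form I would write $\Dtwo=\Delta_x+\mathcal C$ with $\mathcal C=-\frac{4}{m+2k-2}\udx\,\dudx+\frac{4\|u\|^2}{(m+2k-2)(m+2k-4)}\dudx^2$, read off the block form of $\Delta_x=-D_x^2$, and reduce each entry to a statement about $\mathcal C$: on the two off-diagonal entries $\Delta_x$ contributes nothing (since $\Delta_xf^+$ stays in $\Mk$ and $\Delta_xf^-$ in $u\Mkk$), while the diagonal entries require $\mathcal C$ to supply exactly $T_kT_k^*f^+$ and the combination giving $-\frac{m+2k}{m+2k-4}Q_k^2f^-$. The computation is organized around two clean formulas from the toolbox: for $f^+\in\Mk$ one gets $T_k^*f^+=\frac{2}{m+2k-2}\,u\,\dudx f^+$, and for $f^-=ug$ with $g\in\Mkk$ one gets $Q_kf^-=-\frac{m+2k-4}{m+2k-2}\,uD_xg$, each obtained by applying the projectors (\ref{Pk+})–(\ref{Pk-}) and the relation $uD_u+D_uu=-(m+2\Eu)$. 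Iterating these, and using $[\dudx,u]=D_x$ to move every $u$ (and $\|u\|^2=-u^2$) past the degree-lowering operator $\dudx$, expresses $\mathcal Cf^\pm$ and its projections purely in terms of $R_k,T_k,T_k^*,Q_k$; for instance $Q_k^2f^-=(\tfrac{m+2k-4}{m+2k-2})^2uD_x^2g=-(\tfrac{m+2k-4}{m+2k-2})^2u\Delta_xg$, which is what makes the coefficient $-\frac{m+2k}{m+2k-4}$ surface in the $(u\Mkk,u\Mkk)$ entry.

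The main obstacle is bookkeeping rather than any conceptual hurdle. Each time $\dudx$ meets a factor $u$ or $\|u\|^2$ it spawns additional $D_x$ terms through $[\dudx,u]=D_x$, and each time $D_u$ is commuted past a factor of $u$ it spawns an Euler factor $-(m+2\Eu)$ whose value depends on the homogeneity of the polynomial it lands on (degree $k$, $k-1$, or $k-2$ in $u$). Keeping these degree-dependent scalars straight is exactly what forces the denominators $m+2k-2$, $m+2k-4$ and the numerator $m+2k$ into their stated positions, and the $f^-=ug$ cases are the heaviest because the leading factor of $u$ must be carried through every operator. Throughout one works in the generic range $m+2k-4\neq0$, so that the projectors and all coefficients are defined.
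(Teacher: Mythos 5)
Your global architecture is correct, and it is essentially the standard route (the paper states Theorem \ref{thmone} without proof, quoting it from \cite{DR}, where the argument runs through the same Almansi--Fischer block structure): split $f=f^++f^-$, reduce \eqref{D21} to four block-entry identities, and pass between the two displayed forms via the balance relations $T_k^*R_k+Q_kT_k^*=0$ on $\Mk$ and $R_kT_k+T_kQ_k=0$ on $u\Mkk$, which do follow from $\Delta_x=-D_x^2$ together with $P_k^{\mp}\Delta_x P_k^{\pm}=0$. Your first organizing formula, $T_k^*f^+=\frac{2}{m+2k-2}u\dudx f^+$, is also correct. The genuine gap is in your second organizing formula and in the iteration built on it. For $f^-=ug$ with $g\in\Mkk$, the projector computation you describe actually gives, using $D_x(ug)=-uD_xg-2\udx g$ and $D_uD_x(ug)=(m+2k-4)D_xg-2u\dudx g$,
\begin{align*}
Q_k(ug)=\frac{-u}{m+2k-2}\bigl((m+2k-4)D_xg-2u\dudx g\bigr)
=-\frac{m+2k-4}{m+2k-2}\,uD_xg-\frac{2\|u\|^2}{m+2k-2}\dudx g,
\end{align*}
and the $\|u\|^2\dudx g$ term you dropped does not vanish. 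Writing $D_xg=R_{k-1}g+T_{k-1}^*g$ and using the level-$(k-1)$ instance of your own Claim A, namely $T_{k-1}^*g=\frac{2}{m+2k-4}u\dudx g$ (note the shift of the constant), the two non-monogenic pieces cancel and one is left with $Q_k(ug)=-\frac{m+2k-4}{m+2k-2}\,u\,R_{k-1}g$: it is $R_{k-1}$, not $D_x$, that appears.

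This error compounds in your fourth entry. Correct iteration gives $Q_k^2(ug)=\bigl(\frac{m+2k-4}{m+2k-2}\bigr)^2uR_{k-1}^2g$, and on $C^{\infty}(\Rm,\Mkk)$ the level-$(k-1)$ block decomposition of $-\Delta_x=D_x^2$ reads $-\Delta_x=R_{k-1}^2+T_{k-1}T_{k-1}^*$; so your claimed $Q_k^2f^-=-\bigl(\frac{m+2k-4}{m+2k-2}\bigr)^2u\Delta_xg$ is off by the generically nonzero term $\bigl(\frac{m+2k-4}{m+2k-2}\bigr)^2uT_{k-1}T_{k-1}^*g$ --- indeed, assuming $R_{k-1}^2=-\Delta_x$ on monogenics contradicts the very structure of the theorem you are proving, one degree down. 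Separately, re-applying your Claim-B formula with $D_xg$ in the role of $g$ is illegitimate in any case: $D_xg$ is not $u$-monogenic, so $uD_xg$ lies outside the domain $C^{\infty}(\Rm,u\Mkk)$ on which $Q_k$ and your formula are defined. Consequently the verification of the $(u\Mkk,u\Mkk)$ entry --- which, after your correct bookkeeping for $\Delta_x$, amounts to showing that $P_k^-\mathcal{C}f^-$ supplies exactly $T_k^*T_kf^--\frac{4}{m+2k-4}Q_k^2f^-$ --- does not close as organized, and the $(\Mk,u\Mkk)$ entry, which routes through $T_kQ_kf^-$, inherits the same defect. The repair is to redo the $f^-$ column with the corrected identity $Q_k(ug)=-\frac{m+2k-4}{m+2k-2}uR_{k-1}g$ and its companion expansion of $T_k(ug)$, keeping the $\|u\|^2\dudx$ terms throughout.
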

\section{A matrix form for bosonic Laplacians}
We notice that the connection between bosonic Laplacians and Rarita-Schwinger type operators given in Theorem \ref{thmone} leads to a complicated  form of Borel-Pompeiu integral formula in \cite{DR}. Hence, in this section, we introduce a matrix form of bosonic Laplacians, which simplifies the formulation of the equations given in Theorem \ref{thmone}. Further, the matrix form also gives a compact form of Borel-Pompeiu integral formula, which is more convenient to use for our later work on boundary value problems related to bosonic Laplacians.
\par
Using Almansi-Fischer decomposition $\Hk=\Mk\oplus u\Mkk$, given $f(x,u)\in C^{\infty}(\Rm,\Hk)$, one can decompose it as $f(x,u)=f_1(x,u)+f_2(x,u)$, where $f_1\in C^{\infty}(\Rm,\Mk)$ and $f_2\in C^{\infty}(\Rm,u\Mkk)$. Therefore, one can consider $f(x,u)\in C^{\infty}(\Rm,\Hk)$ as a vector $\begin{pmatrix} f_1\\f_2\end{pmatrix}:=\bs{f}$. Further, we notice that $P_k^+f_2=P_k^-f_1=0$, which says that
\begin{align*}
\mathcal{D}_kf=&\bigg(-R_k^2P_k^++\frac{2R_kT_kP_k^-}{m+2k-4}-\frac{2Q_kT_k^*P_k^+}{m+2k-4}-\frac{(m+2k)Q_k^2P_k^-}{m+2k-4}\bigg)(f_1+f_2)\\
=&-R_k^2P_k^+f_1+\frac{2R_kT_kP_k^-}{m+2k-4}f_2-\frac{2Q_kT_k^*P_k^+}{m+2k-4}f_1-\frac{(m+2k)Q_k^2P_k^-}{m+2k-4}f_2,
\end{align*}
where $f,\ f_1$ and $f_2$ are defined as above. With the vector form of a function $f$, this motivates us to rewrite $\Dtwo$ as the following
\be
\bs{D}_2:=\bpm
R_k^2P_k^+ & -\frac{2R_kT_kP_k^-}{m+2k-4}\\
\frac{2Q_kT_k^*P_k^+}{m+2k-4} & \frac{m+2k}{m+2k-4}Q_k^2P_k^-
\epm,
\ee
and the calculation above easily shows that $\mathcal{D}_kf=\bs{D}_2\bs{f}$.
\par
Now, let $\bs{D}_1=\bpm -R_kP_k^+ & \frac{2T_k^*P_k^+}{\sqrt{(m+2k)(m+2k-4)}}\\ \frac{2T_kP_k^-}{m+2k-4} & \sqrt{\frac{m+2k}{m+2k-4}}Q_kP_k^-\epm$, we have a factorization of $\bs{D}_2$ as follows.
\begin{proposition}
We have a factorization of $\bs{D}_2$ given by
$
\bs{D}_2=\bs{D}_1\bs{D}_1^T.
$
\end{proposition}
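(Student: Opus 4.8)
The plan is to verify the identity by direct block-matrix multiplication, since both sides are operators on $C^{\infty}(\Rm,\Hk)$ written in the splitting $\Hk=\Mk\oplus u\Mkk$. First I would write out $\bs{D}_1^T$ explicitly: transposition fixes the diagonal and swaps the off-diagonal entries, so $\bs{D}_1^T$ has $(1,1)$-entry $-R_kP_k^+$, $(1,2)$-entry $\frac{2T_kP_k^-}{m+2k-4}$, $(2,1)$-entry $\frac{2T_k^*P_k^+}{\sqrt{(m+2k)(m+2k-4)}}$, and $(2,2)$-entry $\sqrt{\frac{m+2k}{m+2k-4}}\,Q_kP_k^-$. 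Each of the four entries of $\bs{D}_1\bs{D}_1^T$ is then a sum of two operator compositions, and the whole proof reduces to evaluating these eight compositions.

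The mechanism that makes everything collapse is the structure of the projections. Recall that $P_k^+$ and $P_k^-$ are complementary idempotents from the Almansi--Fischer decomposition \eqref{Almansi}, so $P_k^+P_k^-=P_k^-P_k^+=0$ and $(P_k^\pm)^2=P_k^\pm$; moreover $P_k^+$ acts as the identity on $\Mk$ and annihilates $u\Mkk$, while $P_k^-$ does the reverse. By their definitions $R_k$ and $T_k$ take values in $\Mk$, whereas $T_k^*$ and $Q_k$ take values in $u\Mkk$. Running the compositions with this in mind: in the $(1,1)$-entry the term built from $T_k^*$ feeds a $u\Mkk$-valued output into a leading $P_k^+$ and hence vanishes, leaving $R_k^2P_k^+$; symmetrically in the $(2,2)$-entry the $T_k$-term dies against a leading $P_k^-$, leaving the $Q_k^2$-term. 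In each off-diagonal entry exactly one of the two compositions survives for the same reason, and the surviving compositions reproduce the $R_kT_kP_k^-$ and $Q_kT_k^*P_k^+$ blocks respectively.

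The only computation that needs genuine care is the reconciliation of the scalar coefficients, and this is where the square-root normalizations in $\bs{D}_1$ do their work. For the $(2,1)$-entry I would check that
\[
\sqrt{\tfrac{m+2k}{m+2k-4}}\cdot\frac{2}{\sqrt{(m+2k)(m+2k-4)}}=\frac{2}{m+2k-4},
\]
matching the coefficient of $Q_kT_k^*P_k^+$ in $\bs{D}_2$, and similarly the surviving $(1,2)$-entry yields the coefficient $-\frac{2}{m+2k-4}$ of $R_kT_kP_k^-$. Once these four entries agree with $\bs{D}_2$ the identity $\bs{D}_2=\bs{D}_1\bs{D}_1^T$ follows. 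I do not anticipate a serious obstacle here: the content is entirely the bookkeeping of which projection annihilates which range, together with the single scalar check above; the delicate choice was already made in defining $\bs{D}_1$ so that its entries carry exactly the square-root factors needed to linearize the rational coefficients of $\bs{D}_2$.
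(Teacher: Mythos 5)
Your proposal is correct and takes essentially the same route as the paper's proof: the paper also verifies $\bs{D}_2=\bs{D}_1\bs{D}_1^T$ by direct block multiplication, observing that $R_kP_k^+f_1,\ T_kP_k^-f_2$ are $\Mk$-valued while $T_k^*P_k^+f_1,\ Q_kP_k^-f_2$ are $u\Mkk$-valued, so that $P_k^+T_k^*=P_k^+Q_k=0$ and $P_k^-R_k=P_k^-T_k=0$ annihilate exactly the cross terms you identify. Your explicit scalar check $\sqrt{\tfrac{m+2k}{m+2k-4}}\cdot\tfrac{2}{\sqrt{(m+2k)(m+2k-4)}}=\tfrac{2}{m+2k-4}$ is the same coefficient reconciliation the paper leaves implicit in its final matrix computation.
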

\begin{proof}
To verify the result above, all we need is to notice that
\begin{align*}
&R_kP_k^+f_1,\ T_kP_k^-f_2\in C^{\infty}(\Rm,\Mk),\quad  T_k^*P_k^+f_1,\ Q_kP_k^-f_2\in  C^{\infty}(\Rm,u\Mkk),\\
&P_k^+T_k^*=P_k^+Q_k=0,\quad P_k^-R_k=P_k^-T_k=0,
\end{align*}
which come from the definitions of the two projections $P_k^+,\ P_k^-$ given in \eqref{Pk+} and \eqref{Pk-}. Then, one can easily see that 
\be
&&\bpm -R_kP_k^+ & \frac{2T_k^*P_k^+}{\sqrt{(m+2k)(m+2k-4)}}\\ \frac{2T_kP_k^-}{m+2k-4} & \sqrt{\frac{m+2k}{m+2k-4}}Q_kP_k^-\epm \bpm -R_kP_k^+ & \frac{2T_kP_k^-}{m+2k-4}\\ \frac{2T_k^*P_k^+}{\sqrt{(m+2k)(m+2k-4)}} & \sqrt{\frac{m+2k}{m+2k-4}}Q_kP_k^-\epm\\
=&&\bpm
R_k^2P_k^+ & -\frac{2R_kT_kP_k^-}{m+2k-4}\\
\frac{2Q_kT_k^*P_k^+}{m+2k-4} & \frac{m+2k}{m+2k-4}Q_k^2P_k^-
\epm.
\ee
\end{proof}
\begin{remark}
When $k=1$ above, we have $P_k^+=Id,\ P_k^-=0$, then we can easily check that the decomposition reduces to $-\Delta_x=D_x^2$.
\end{remark}
Hence, we also have a matrix form for the four Rarita-Schwinger type operators. Further, one can obtain a Stokes' Theorem for $\bs{D}_1$ as following.
\begin{theorem}[Stokes' Theorem]\label{ST1}
Given functions $\bs{f},\bs{g}\in C^{\infty}(\Rm,\Hk)$, we have
\begin{align*}
&\int_{\partial\Omega}\int_{\Sm}\overline{\bs{f}(x,u)}^T\bs{d\sigma_x}\bs{g}(x,u)dS(u)\\
&=\int_{\Omega}\int_{\Sm}(\overline{\bs{D}_1\bs{f}(x,u)})^T\bs{g}(x,u)dS(u)dx+\int_{\Omega}\int_{\Sm}\overline{\bs{f}(x,u)}^T(\bs{D_1}\bs{g(x,u)})dS(u)dx,
\end{align*}
where $\bs{d\sigma_x}=\bpm 1 &0\\0&-\sqrt{\frac{m+2k}{m+2k-4}}\epm n(x)d\sigma(x)$, n(x) is the outward unit normal vector on $\partial\Omega$, $d\sigma(x)$ is the area element on $\partial\Omega$ and $dS(u)$ is the area element on the unit sphere.
\end{theorem}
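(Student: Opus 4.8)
The plan is to reduce the matrix identity to the Stokes' theorems already available for the Rarita--Schwinger operator $R_k$ and the remaining operator $Q_k$, exploiting that $\bs{D}_1$ acts diagonally once a function is split by the Almansi--Fischer decomposition. Writing $\bs{f}=\bpm f_1\\ f_2\epm$ with $f_1\in C^{\infty}(\Rm,\Mk)$ and $f_2\in C^{\infty}(\Rm,u\Mkk)$, the relations $P_k^+f_2=P_k^-f_1=0$ used in the Proposition annihilate both off-diagonal entries, so that
\[
\bs{D}_1\bs{f}=\bpm -R_kf_1\\ \sqrt{\frac{m+2k}{m+2k-4}}\,Q_kf_2\epm ,
\]
and likewise for $\bs{g}$. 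First I would substitute this into the three bilinear expressions $\overline{\bs{D}_1\bs{f}}^{T}\bs{g}$, $\overline{\bs{f}}^{T}\bs{D}_1\bs{g}$ and $\overline{\bs{f}}^{T}\bs{d\sigma_x}\bs{g}$ and read off the (diagonal) matrix products. The single matrix equation then separates into two scalar equations: a $(1,1)$ equation relating $R_kf_1$, $R_kg_1$ to the boundary density $\overline{f_1}\,n(x)\,g_1$ with weight $1$, and a $(2,2)$ equation relating $Q_kf_2$, $Q_kg_2$ to $\overline{f_2}\,n(x)\,g_2$, in which the common factor $\sqrt{\frac{m+2k}{m+2k-4}}$ carried by $\bs{D}_1$ and by the corresponding entry of $\bs{d\sigma_x}$ appears on both sides and cancels, leaving the bare $Q_k$-identity up to the sign recorded in $\bs{d\sigma_x}$.

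Next I would identify these two scalar equations with the Stokes' theorems for $R_k$ and $Q_k$ from \cite{Dunkl,Li}. Those carry projections around $n(x)$ in their boundary integrands, of the form $\overline{f_1}\,P_k^+n(x)P_k^+g_1$ and $\overline{f_2}\,P_k^-n(x)P_k^-g_2$. The key structural fact I would use is that the Almansi--Fischer splitting $\Hk=\Mk\oplus u\Mkk$ is orthogonal with respect to the spherical inner product $\int_{\Sm}\overline{\,\cdot\,}(\cdot)\,dS(u)$, equivalently that $P_k^+$ and $P_k^-$ are self-adjoint for it. Since $f_1,g_1\in\Mk$ and $f_2,g_2\in u\Mkk$, the inner projection acts as the identity, while the outer projection can be transferred onto the conjugated factor by self-adjointness and then dropped because $P_k^{\pm}$ fixes it; this collapses each sandwiched projection to the bare $n(x)$ appearing in $\bs{d\sigma_x}$. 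Adding the $R_k$- and $Q_k$-identities with the weights $1$ and $-\sqrt{\frac{m+2k}{m+2k-4}}$ supplied by $\bs{d\sigma_x}$ then reassembles exactly the asserted formula; a fully self-contained version instead feeds $R_k=P_k^+D_x$ and $Q_k=P_k^-D_x$ into the Dirac-operator Green identity in the $x$-variable, with $u\in\Sm$ a parameter, and uses the same self-adjointness to move the projections off $n(x)$.

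I expect the only real difficulty to be the sign-and-constant bookkeeping rather than anything conceptual. One must fix the conjugation meant by $\overline{\,\cdot\,}$ --- Clifford conjugation reverses products and sends $e_j\mapsto-e_j$, which governs the relative signs of the two bulk terms --- confirm that the normalization of the $Q_k$-Stokes' theorem in \cite{Li} is the one compatible with the factor $\sqrt{\frac{m+2k}{m+2k-4}}$ built into $\bs{D}_1$, and check that these jointly force precisely the diagonal entries $1$ and $-\sqrt{\frac{m+2k}{m+2k-4}}$ of $\bs{d\sigma_x}$. Granting the orthogonality of the Almansi--Fischer decomposition, which is what makes both the off-diagonal operator entries invisible to $\bs{D}_1$ and the boundary projections removable, the remainder is a finite verification.
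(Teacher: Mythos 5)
Your overall strategy is exactly the one the paper intends: Theorem \ref{ST1} is stated there without proof, and the evident route is the one you describe --- observe that on a vector $\bs{f}=\bpm f_1\\ f_2\epm$ split by the Almansi--Fischer decomposition the off-diagonal entries of $\bs{D}_1$ are killed by $P_k^+f_2=P_k^-f_1=0$, so that $\bs{D}_1\bs{f}=\bpm -R_kf_1\\ \sqrt{\tfrac{m+2k}{m+2k-4}}\,Q_kf_2\epm$, and then assemble the claimed identity from the componentwise Stokes theorems for $R_k$ and $Q_k$ of \cite{Dunkl,Li,DR}. Your two supporting facts are also sound: the constant $\sqrt{\tfrac{m+2k}{m+2k-4}}$ does appear linearly on both sides of the $(2,2)$ equation and cancels, and the orthogonality of $\Mk$ and $u\Mkk$ under the Clifford-valued spherical pairing $\int_{\Sm}\overline{\,\cdot\,}(\cdot)\,dS(u)$ is true (for instance, for $p(u)=\sum_ju_ja_j$ with $\sum_j\bs{e}_ja_j=0$ one computes $\int_{\Sm}\overline{p(u)}\,u\,dS(u)=\tfrac{\omega_m}{m}\sum_j\overline{a_j}\bs{e}_j=0$), and it is precisely what lets you strip the projections $P_k^{\pm}$ off the boundary densities of the cited theorems.

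The genuine gap is the sign bookkeeping you explicitly defer, because it is not a routine finite check: it is where the proof currently fails to close. Clifford conjugation gives $(\overline{f}\,D_x):=\sum_j(\partial_{x_j}\overline{f})\bs{e}_j=-\overline{D_xf}$, so the scalar Green identity in conjugated form carries a \emph{relative minus} between the two volume terms:
\begin{align*}
\int_{\partial\Omega}\int_{\Sm}\overline{f_1}\,n(x)\,g_1\,dS(u)\,d\sigma(x)
=\int_{\Omega}\int_{\Sm}\Bigl(\overline{f_1}\,(R_kg_1)-\overline{R_kf_1}\,g_1\Bigr)dS(u)\,dx,
\end{align*}
after the cross terms $T_k^*$ are removed by orthogonality, and similarly for $Q_k$. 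The theorem as printed has both bulk terms with a plus sign, and the minus in the $(1,1)$ entry $-R_kP_k^+$ of $\bs{D}_1$ enters quadratically in $(\overline{\bs{D}_1\bs{f}})^T\bs{g}+\overline{\bs{f}}^T(\bs{D}_1\bs{g})$, so it cannot restore the relative sign. A concrete test exposes this: take $f_2=g_2=0$ and $f_1=f_1(u)$ independent of $x$, so $R_kf_1=0$; the display above gives $+\int_{\Omega}\int_{\Sm}\overline{f_1}R_kg_1$, while the stated formula gives $-\int_{\Omega}\int_{\Sm}\overline{f_1}R_kg_1$. So ``reassembles exactly the asserted formula'' is not yet established: you must either work in the right-module formulation of \cite{Dunkl,Li,DR}, where both bulk terms $(fR_k)g+f(R_kg)$ legitimately carry plus signs because the left slot is $(fR_k)$ rather than $\overline{R_kf}$, and then translate carefully into the conjugated form, or identify the involution/orientation convention (reversion in place of conjugation, or a sign convention hidden in $\bs{d\sigma_x}$) under which the printed statement is literally true --- and, failing that, record the sign discrepancy as a correction to the statement. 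Until that is pinned down, the proposal is a correct reduction scheme but not a proof.
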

One can also have a Borel-Pompeiu formula as follows.
\begin{theorem}[Borel-Pompeiu formula]
Let $\bs{f}\in C^{\infty}(\Omega,\Hk)$, then we have
\baa
f(y,v)
=&\int_{\partial\Omega}\int_{\Sm}\overline{\mathscr{E}_k(x-y,u,v)}^T\bs{d\sigma_x}\bs{f}(x,u)dS(u)\\
&-\int_{\Omega}\int_{\Sm}\overline{\mathscr{E}_k(x-y,u,v)}^T(\bs{D_1}\bs{f}(x,u))dS(u)dx.
\end{align*}
\end{theorem}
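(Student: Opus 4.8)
The plan is to derive the formula from the Stokes' Theorem of Theorem \ref{ST1} by inserting the fundamental solution $\mathscr{E}_k$ into the slot of $\bs{f}$ and handling the singularity at $x=y$ by an excision argument. First I would record the defining property of the kernel. Since the factorization $\bs{D}_2=\bs{D}_1\bs{D}_1^T$ holds and $H_k$ is the fundamental solution of $\mathcal{D}_k=\bs{D}_2$, the natural candidate is to apply $\bs{D}_1^T$ to the matrix form of $H_k$, so that $\mathscr{E}_k$ satisfies $\bs{D}_1\mathscr{E}_k=\bs{D}_2H_k=\delta_y$ in the distributional sense and, in particular, $\bs{D}_1\mathscr{E}_k(x-y,\cdot,\cdot)=0$ for $x\neq y$. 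I would also note that this makes the singularity of $\mathscr{E}_k$ of order $\|x-y\|^{1-m}$ (one derivative applied to the $\|x-y\|^{2-m}$ behaviour of $H_k$), which is locally integrable on $\R^m$.

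Next I would fix $y\in\Omega$ and apply Theorem \ref{ST1} on the punctured domain $\Omega_\epsilon=\Omega\setminus\overline{B(y,\epsilon)}$, taking $\bs{g}=\bs{f}$ the given function and replacing $\bs{f}$ in the theorem by $x\mapsto\mathscr{E}_k(x-y,u,v)$, which is smooth on $\Omega_\epsilon$. Because $\bs{D}_1\mathscr{E}_k(x-y,\cdot,\cdot)=0$ there, the first volume integral in Stokes' Theorem vanishes, leaving
$$\int_{\partial\Omega_\epsilon}\int_{\Sm}\overline{\mathscr{E}_k(x-y,u,v)}^T\bs{d\sigma_x}\bs{f}(x,u)dS(u)=\int_{\Omega_\epsilon}\int_{\Sm}\overline{\mathscr{E}_k(x-y,u,v)}^T(\bs{D}_1\bs{f}(x,u))dS(u)dx.$$
Splitting the boundary as $\partial\Omega_\epsilon=\partial\Omega\cup(-\partial B(y,\epsilon))$ according to orientation isolates the contribution of the inner sphere with the correct sign.

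The heart of the argument is the $\epsilon\to0$ limit of the inner-sphere term, which should reproduce the value $f(y,v)$:
$$\lim_{\epsilon\to0}\int_{\partial B(y,\epsilon)}\int_{\Sm}\overline{\mathscr{E}_k(x-y,u,v)}^T\bs{d\sigma_x}\bs{f}(x,u)dS(u)=f(y,v).$$
Writing $x-y=\epsilon\bs{\omega}$ with $\bs{\omega}$ on the inner sphere, the surface measure of order $\epsilon^{m-1}$ balances the $\epsilon^{1-m}$ singularity of $\mathscr{E}_k$, so the geometric part contracts to a finite limit, while the $u$-integration against the reproducing kernels $Z_k^1$ and $Z_{k-1}^1$ carried by $E_k$ and $F_k$ inside $\mathscr{E}_k$ projects $\bs{f}(y,\cdot)$ onto its $\Mk$ and $u\Mkk$ components and reassembles $f(y,v)\in\Hk$ through the Almansi--Fischer decomposition \eqref{Almansi}. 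Simultaneously, the local integrability of $\mathscr{E}_k$ near $x=y$ lets the volume integral over $\Omega_\epsilon$ converge to the one over $\Omega$. Rearranging the resulting identity then yields exactly the claimed formula, with the $\partial\Omega$ term carrying the $+$ sign and the volume term the $-$ sign.

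The main obstacle I expect is this inner-sphere limit: verifying that the matrix kernel $\mathscr{E}_k$ together with the weight $\bs{d\sigma_x}$ produces precisely the identity on $\Hk$, rather than a scalar multiple or a proper sub-projection. This forces one to track in tandem the normalizing constant $a_k=\tfrac{m-2}{m+2k-2}$, the factors $\sqrt{\tfrac{m+2k}{m+2k-4}}$ that appear both in $\bs{D}_1$ and in $\bs{d\sigma_x}$, and the reproducing identities for $Z_k^1$ and $Z_{k-1}^1$; making the resulting constant equal $1$ across both diagonal blocks is where the bookkeeping is delicate, and I would organize the computation so that the two blocks are treated by the same contraction argument applied to $\Mk$ and to $u\Mkk$ respectively.
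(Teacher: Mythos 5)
The paper itself states this theorem without proof --- indeed it never even defines the matrix kernel $\mathscr{E}_k$ --- so there is no in-paper argument to compare against; your reconstruction is the standard excision argument these authors use for such formulas elsewhere (e.g., in \cite{Dunkl,DR}), and its architecture is sound: apply Theorem \ref{ST1} on $\Omega\setminus\overline{B(y,\epsilon)}$ with the kernel in the first slot, use $\bs{D}_1\mathscr{E}_k(x-y,\cdot,\cdot)=0$ for $x\neq y$ to kill one volume term, split the boundary, and let $\epsilon\to 0$. Your guess $\mathscr{E}_k=\bs{D}_1^T$ applied to the matrix form of $H_k$ is exactly what the factorization $\bs{D}_2=\bs{D}_1\bs{D}_1^T$ suggests, and the $\|x-y\|^{1-m}$ singularity estimate justifying convergence of the volume integral is correct.

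One substantive comment on the step you flag as the main obstacle: the constant-balancing you fear can be bypassed entirely. Since $H_k$ is already normalized as the fundamental solution of $\Dtwo$, in the distributional sense $\Dtwo H_k(\cdot,u,v)$ is the delta at the origin tensored with the reproducing kernel $Z_k^2(u,v)$ of $\Hk$; hence $\bs{D}_1\mathscr{E}_k$ is automatically the delta times the (matrix) reproducing kernel for $\Mk\oplus u\Mkk$, with constant exactly $1$ in both blocks --- the factors $a_k$ and $\sqrt{(m+2k)/(m+2k-4)}$ were already arranged in $\bs{D}_1$ and $\bs{d\sigma_x}$ so that Theorem \ref{ST1} holds, and the blockwise reproducing identities for $Z_k^1$ and $Z_{k-1}^1$ then reassemble $f(y,v)$ through the Almansi--Fischer decomposition \eqref{Almansi} as you describe, with no separate bookkeeping. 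Two small cautions: you should verify that the Clifford conjugation (which reverses products and negates vectors) interacts correctly with the matrix transpose, i.e., that the kernel surviving in the final formula is genuinely $\overline{\mathscr{E}_k}^T$ with its blocks in the stated positions and that $\overline{\bs{D}_1\mathscr{E}_k}^T=0$ off the diagonal is the property actually needed in the first slot of Theorem \ref{ST1}; and the statement implicitly requires $y\in\Omega$ and $\bs{f}$ smooth up to $\partial\Omega$, which your argument uses when passing to the limit in the outer boundary and volume terms.
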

\section{Conclusion}
The main contribution of this paper is to introduce a new matrix form for bosonic Laplacians, which allows us to factorize these rather complicated second order differential operators as a multiplication of a first order differential operator and its transpose. Further, this result gives rise to a new matrix form for the Rarita-Schwinger type operators as well, and some properties for the matrix type bosonic Laplacians and Rarita-Schwinger operators are provided in more concise forms. This leads to a possibility to a further study on the Rarita-Schwinger type operators and bosonic Laplacians with matrix forms.


\begin{thebibliography}{99}


\bibitem{Brackx}F. Brackx, R. Delanghe, F. Sommen, \emph{Clifford Analysis}, Pitman, London, 1982.

\bibitem{Bures} J. Bure\v{s}, F. Sommen, V. Sou\v{c}ek, P. Van Lancker, \emph{Rarita-Schwinger Type Operators in Clifford Analysis}, J. Funct. Anal. 185 (2001), No. 2, pp. 425-455.

\bibitem{DeBie} H. De Bie, D. Eelbode, M. Roels, \emph{The higher spin Laplace operator}, Potential Analysis, Vol. 47, Issue 2, 2017, pp. 123-149.

\bibitem{Delanghe} R. Delanghe, F. Sommen, V. Sou\v cek, \emph{Clifford Analysis  and Spinor Valued Functions}, Kluwer Academic Dordrecht, 1992.

\bibitem{DR} C. Ding, J. Ryan, \emph{Some Properties of the Higher Spin Laplace Operator}, Transactions of the American Mathematical Society, Vol. 371, Issue 5, 2019, pp. 3375-3395.

\bibitem{DR1}C. Ding, J. Ryan, \emph{On Some Conformally Invariant Operators in Euclidean Space}, Clifford Analysis and Related Topics, In Honor of Paul A.M. Dirac, CART 2014, Tallahassee, Florida, December 15-17.

\bibitem{DNR}C. Ding, P.T. Nguyen, J. Ryan, \emph{Boundary value problems in Euclidean space for Bosonic Laplacians}, submitted.  arXiv:2002.00870v3 [math-ph] 



\bibitem{Dunkl} C. F. Dunkl, J. Li, J. Ryan, P. Van Lancker, \emph{Some Rarita-Schwinger type operators}, Computational Methods and Function Theory, Vol. 13, Issue 3, 2013, pp. 397-424.

\bibitem{Eelbode}, D. Eelbode, M. Roels, \emph{Generalised Maxwell equations in higher dimensions}, Complex Analysis and Operator Theory, Vol. 10, Issue 2, 2016, pp. 267-293.

\bibitem{Jackson} J.D. Jackson, \emph{Classical Electrodynamics}, $3$rd edition, Wiley, 1998.

\bibitem{Li} J. Li, J. Ryan, \emph{Some operators associated to Rarita-Schwinger type operators}, Complex Variables and Elliptic Equations: An International Journal, Volume 57, Issue 7-8, 2012, pp. 885-902.

\bibitem{Stein}E. Stein, G. Weiss, \emph{Generalization of the Cauchy-Riemann equations and representations of the rotation group}, Amer. J. Math., Vol. 90, 1968, pp. 163-196.

\end{thebibliography}
\end{document}